\documentclass[11pt, letterpaper]{article}
\usepackage{amsmath, amssymb, amsthm, amsfonts, bbm, commath}
\usepackage[inline]{enumitem}
\usepackage{tikz}
\usetikzlibrary{arrows.meta}
\usepackage{subcaption}
\usepackage{authblk}
\usepackage{graphicx}
\usepackage{hyperref}
\usepackage{xcolor}
\usepackage[ruled]{algorithm2e}

\pagestyle{plain}                                                      
\setlength {\topmargin}{-0.55in}                        
\setlength {\textheight}{8.9in}                 
\setlength {\oddsidemargin}{5pt}                
\setlength {\evensidemargin}{0pt}               
\setlength {\textwidth}{6.4in}                  
\setlength {\parskip}{6pt}    
\setlength {\parsep}{6pt}   %
\setlength {\parindent}{10pt}                   %

\title{
	Byzantine Consensus under Local Broadcast Model: Tight Sufficient Condition
		\thanks{This research is supported in part by National Science Foundation 
		award 1733872, and Toyota InfoTechnology Center. Any opinions, findings, and conclusions or recommendations expressed here are those of the authors and do not necessarily reflect the views of the funding agencies or the U.S. government.
		}
}

\author[1]{
	Muhammad Samir Khan
}

\author[2]{
	Nitin H. Vaidya
}

\affil[1]{
	Department of Computer Science \protect\linebreak
	University of Illinois at Urbana-Champaign \protect\linebreak
	\texttt{mskhan6@illinois.edu} \protect\linebreak
}

\affil[2]{
	Department of Computer Science \protect\linebreak
	Georgetown University \protect\linebreak
	\texttt{nv198@georgetown.edu}
}


\newtheorem{theorem}{Theorem}[section]
\newtheorem{lemma}[theorem]{Lemma}

\newtheorem{observation}[theorem]{Observation}

\renewenvironment{proof}{\noindent{\bf Proof:} \hspace*{1mm}}{
	\hspace*{\fill} $\Box$ }
\newenvironment{proof_of}[1]{\noindent {\bf Proof of #1:}
	\hspace*{1mm}}{\hspace*{\fill} $\Box$ }

\newcommand{\floor}[1]{\lfloor {#1} \rfloor}

\begin{document}
	\maketitle

	\begin{abstract}
		\normalsize
In this work we consider Byzantine Consensus on undirected communication graphs under the local broadcast model.
In the classical point-to-point communication model the messages exchanged between two nodes $u, v$ on an edge $uv$ of $G$ are private.
This allows a faulty node to send conflicting information to its different neighbours, a property called \emph{equivocation}.
In contrast, in the local broadcast communication model considered here, a message sent by node $u$ is received identically by all of its neighbours.
This restriction to broadcast messages provides non-equivocation even for faulty nodes.
In prior results \cite{NaqviMaster, NaqviBroadcast} it was shown that in the local broadcast model the communication graph must be $(\floor{3f/2}+1)$-connected and have minimum degree at least $2f$ to achieve Byzantine Consensus.
In this work we show that this network condition is tight.
	\end{abstract}

	\section{Introduction} \label{section introduction}
In this work we consider Byzantine Consensus on undirected communication graphs under the local broadcast model.
$n$ nodes with binary input communicate with their neighbours in the communication graph $G$ via message passing to reach consensus on a binary output.
In addition, up to $f$ nodes are byzantine faulty where they may exhibit arbitrary behaviour.
Non-faulty nodes must achieve consensus in finite time and must agree on an input of some non-faulty node.

In the classical point-to-point communication model the messages exchanged between two nodes $u, v$ on an edge $uv$ of $G$ are private.
This allows a faulty node to send conflicting information to its different neighbours, a property called \emph{equivocation}.
In contrast, in the local broadcast communication model considered here, a message sent by node $u$ is received identically by all of its neighbours.
This restriction to broadcast messages provides non-equivocation even for faulty nodes.
This communication model is motivated by wireless broadcast networks where messages sent by an entity are received identically by those in its vicinity.

For point-to-point communication model in undirected graphs a $(2f+1)$-connectivity and $n \ge 3f+1$ are together necessary and sufficient \cite{DOLEV198214, Fischer1986}.
In contrast, prior work \cite{NaqviMaster, NaqviBroadcast} has established that for local broadcast a $(\floor{3f/2}+1)$-connectivity and a minimum degree of $2f$ is necessary.
It was also shown \cite{NaqviMaster, NaqviBroadcast} that $2f$-connectivity is sufficient with local broadcast.
In this work we close this gap by presenting an algorithm for undirected graphs with $(\floor{3f/2}+1)$-connectivity and minimum degree of $2f$.

The rest of the paper is organized as follows.
In Section \ref{section related work} we discuss related work.
Section \ref{section model} formalizes the system model and introduces notation for the setting.
We present our main result in Section \ref{section result}.

	\section{Related Work} \label{section related work}
The Byzantine Consensus problem was introduced by Lamport, Shostak, and Pease \cite{Lamport:1982:BGP:357172.357176, Pease:1980:RAP:322186.322188} who proved that $n \ge 3f+1$ is a tight condition for complete graphs.
Subsequent work \cite{DOLEV198214, Fischer1986} identified necessary and sufficient conditions for arbitrary undirected graphs.
For undirected graphs it is well known that reliable communication between nodes is both necessary and sufficient for consensus.
For directed graphs Tseng and Vaidya \cite{LewisByzantineDirected} gave necessary and sufficient conditions where reliable communication between all nodes is not provided.
All of these works considered point-to-point communication links.

Prior work \cite{Bhandari:2005:RBR:1073814.1073841, Koo:2006:RBR:1146381.1146420, Pagourtzis2017, Pelc:2005:BLB:1062032.1711012, TSENG2015512} has looked at the local broadcast model.
However, they were interested in achieving Byzantine \emph{Broadcast} as opposed to Byzantine \emph{Consensus}, the aim of this paper.
In contrast with the point-to-point communication, the network requirements are different for the two problems with local broadcast.
Zhang and Sundaram \cite{ZhangIterative} gave necessary and sufficient conditions for approximate real valued Byzantine Consensus using an iterative algorithm with local broadcast.
In this work we look at exact binary valued consensus.

The ability of faulty nodes to send conflicting information to its different neighbours is called ``equivocation'' in the literature.
Many works have looked at the effects of restricting equivocation for faulty nodes.
Rabin and Ben-Or \cite{Rabin:1989:VSS:73007.73014} considered a global broadcast model, which is the same as our model on complete graphs, and showed that $n \ge 2f+1$ is both sufficient and necessary for any multiparty protocol using synchronous communication.
Clement et. al. \cite{Clement:2012:PN:2332432.2332490} looked at non-equivocation in complete graphs for asynchronous communication.
Different works \cite{ Fitzi:2000:PCG:335305.335363, Jaffe:2012:PEB:2332432.2332491, Li7516067, Ravikant10.1007/978-3-540-30186-8_32} investigate the impact of limiting equivocation via partial broadcast channels modeled as broadcast over hyperedges.

    \section{System Model and Notation} \label{section model}
We consider $n$ nodes in an undirected communication graph $G = (V, E)$.
Two nodes $u$ and $v$ are \emph{neighbors} if $uv \in E$ is an edge of $G$.
The \emph{degree} of a node $u$ is the number of $u$'s neighbors or, equivalently, the number of edges incident to $u$.
Every node $v \in V$ knows the topology of the communication graph $G$.
The communication in the network is synchronous, allowing nodes to make progress in synchronous rounds.
We consider a \emph{local broadcast} model of communication where a message $m$ sent by $u$ is received identically and correctly by all neighbors of $u$.
In contrast, the classical \emph{point-to-point} model allows a node $u$ to target a specific neighbor for each message sent.
This allows a bad node to possibly sent conflicting information to its different neighbors.
We also assume that when a node receives a message, it can correctly determine the neighbor that transmitted the message.

We consider the \emph{Byzantine} model of faults where a faulty node can exhibit arbitrary behavior, including an adversary with complete knowledge of the current system state to coordinate actions among all faulty nodes.
A maximum of $f$ nodes in $G$ can be faulty.
In the \emph{Byzantine Consensus Problem}, each node has a binary input and the goal is for each node to output a binary value, satisfying the following conditions, in the presence of at most $f$ faulty nodes.
\begin{enumerate}
    [label=\arabic*),topsep=0pt,labelindent=0pt]
    \item \textbf{Agreement:}
    All non-faulty nodes must output the same value.
    \item \textbf{Validity:}
    The output of each non-faulty node must be an input of some non-faulty node.
    \item \textbf{Termination:}
    All non-faulty nodes must decide on their output in finite time.
\end{enumerate}

For two nodes $u$ and $v$, a $uv$-path $P_{uv}$ is a path from $u$ to $v$.
$u$ is the \emph{source} node of $P_{uv}$ and $v$ is the \emph{terminal} node of $P_{uv}$.
Every other node in $P_{uv}$ is an \emph{internal} node of $P_{uv}$.
$uv$-paths are \emph{node disjoint} if they do not have any common internal nodes.
For a set $U$ and node $v \not \in U$, a $Uv$-path is a $uv$-path for some node $u \in U$.
$Uv$-paths are \emph{node disjoint} if they do not have any common source or internal nodes.
A $uv$-path or a $Uv$-path is said to \emph{exclude} a set of nodes $X$ if no internal node of the path belongs to $X$.
Note that the source and terminal nodes can still be from $X$.
A $uv$-path or a $Uv$-path is \emph{fault-free} if it does not have any faulty internal node, i.e. excludes the set of faulty nodes.
Note that we allow a fault-free path to have a faulty source and/or terminal node.

A graph $G$ is $k$-connected if $\abs{V(G)} > k$ and removal of any $\le k$ nodes does not disconnect $G$.
By Menger's Theorem \cite{west2001introduction} a graph $G$ is $k$-connected if and only if for any two nodes $u, v \in V$ there exist $k$ node disjoint $uv$-paths.
Another standard result \cite{west2001introduction} for $k$-connected graphs is that if $G$ is $k$-connected, then for any node $v$ and a set of at least $k$ nodes $U$ there exist $k$ node disjoint $Uv$-paths.

	\section{Our Result} \label{section result}
Our main result is that if $G$ is $(\floor{3f/2} + 1)$-connected and each node has degree at least $2f$, then this is sufficient for Byzantine Consensus under the local broadcast model.

\begin{theorem} \label{theorem sufficiency}
    Under the local broadcast model, Byzantine Consensus with at most $f$ faults is achievable if $G$ is $(\floor{3f/2}+1)$-connected and every node in $G$ has degree at least $2f$. 
\end{theorem}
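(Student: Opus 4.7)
The plan is to reduce Byzantine Consensus to a reliable value-dissemination primitive, which I will call \emph{propagation}, and then aggregate by a deterministic rule. Writing $s \propagate{} v$ with value $x$ to mean that the non-faulty node $v$ has committed to $x$ as the value coming from source $s$, the top-level algorithm is standard: in round $0$ every node $s$ locally broadcasts its input $x_s$; in each subsequent round every non-faulty node rebroadcasts all $(s, x)$ commitments it has so far; after a bounded (polynomial in $n$) number of rounds, each non-faulty node outputs a deterministic function of its accepted set of $(s, x)$ pairs (for instance, the majority bit with a fixed tie-breaking rule).

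The reduction needs two properties of propagation, to be established under the hypothesis that $G$ is $(\lfloor 3f/2 \rfloor + 1)$-connected with minimum degree at least $2f$: (i) \emph{Validity:} if $s$ is non-faulty with input $x_s$, then within finitely many rounds every non-faulty node $v$ sets $s \propagate{} v$ with value $x_s$, and no non-faulty node ever sets $s \propagate{} v$ to any other value; and (ii) \emph{Agreement:} for every (possibly faulty) source $s$ and every value $x$, either every non-faulty $v$ eventually sets $s \propagate{} v$ with $x$, or none ever does. Given (i) and (ii), the aggregation step satisfies the Agreement, Validity, and Termination conditions of Section \ref{section model}.

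The commitment rule I would use is built from node-disjoint paths. Roughly, $v$ sets $s \propagate{} v$ with $x$ once it can identify, in its accumulated local view, a collection of node-disjoint $sv$-paths (possibly truncated at an already-committing intermediate node) along which $x$ has been relayed consistently, whose size exceeds a carefully chosen threshold depending on $f$. Under $(\lfloor 3f/2 \rfloor + 1)$-connectivity, Menger's theorem supplies at least $\lfloor 3f/2 \rfloor + 1$ node-disjoint $sv$-paths, of which at most $f$ carry a faulty internal node, leaving at least $\lfloor f/2 \rfloor + 1$ fault-free paths. This alone is below the naive $f+1$ majority threshold, so two further ingredients must be exploited: non-equivocation, which forbids a faulty internal node from relaying different values on node-disjoint paths that pass through it, and the minimum-degree condition, which guarantees that $v$ sees the consistent broadcasts of at least $f$ non-faulty neighbours against which suspicious relays can be cross-checked.

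The main obstacle is designing the commitment rule so that both (i) and (ii) hold at the tight threshold. I expect the proof to factor into an \emph{inheritance lemma} --- if a non-faulty node commits $(s,x)$, then enough of its non-faulty neighbours commit in the next round, so commitments spread from a non-faulty source $s$ to every non-faulty node by induction on distance in $G$, with the minimum-degree hypothesis driving the spread step --- and an \emph{exclusion lemma} --- no two non-faulty nodes can commit to different values for the same source, derived from the $(\lfloor 3f/2 \rfloor + 1)$-connectivity by stitching two putative witnessing structures for $x$ and $y \neq x$ into a configuration that would force more than $f$ faulty nodes. Balancing these two lemmas at the tight threshold, and in particular making the same rule work at a non-faulty source of degree exactly $2f$ as well as deep inside the graph, is the delicate part of the argument and is where both hypotheses are used simultaneously.
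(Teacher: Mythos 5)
Your plan leaves its central objects --- the commitment rule, the inheritance lemma, and the exclusion lemma --- unspecified, so as written it is a programme rather than a proof; but the gap is deeper than incompleteness, because property (i) of your propagation primitive (every non-faulty node eventually commits the true input of every non-faulty source) is simply not achievable under the theorem's hypotheses, so no choice of threshold or cross-checking rule can complete the reduction. Take $f=1$ and $G$ a cycle on $n \ge 5$ nodes: it is $2$-connected with minimum degree $2 = 2f$, so it satisfies the hypotheses of Theorem \ref{theorem sufficiency}. Fix a source $s$ and a node $u$ far from $s$, and compare scenario A, where $s$ has input $0$ and the clockwise neighbour of $s$ is faulty, with scenario B, where $s$ has input $1$ and the counterclockwise neighbour of $s$ is faulty. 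Defining, round by round, each faulty node's broadcasts in one scenario to be exactly what the honest node at that position sends in the other scenario makes the views of all nodes other than $s$ and its two neighbours identical in A and B; hence $u$ cannot commit $0$ in A and $1$ in B, and (i) fails. Non-equivocation does not rescue this: a faulty relay lies \emph{consistently} to all its neighbours about what it received, and your minimum-degree cross-check is a guarantee only at distance one. This is precisely the separation noted in Section \ref{section related work}: under local broadcast, Byzantine \emph{broadcast} (which is what your properties (i) and (ii) amount to, per source) has different network requirements than Byzantine \emph{consensus}, and the count you yourself observe --- only $\floor{f/2}+1$ guaranteed fault-free disjoint paths --- is the symptom of that.

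The paper's proof accordingly does not route through any per-source dissemination primitive. Its algorithm iterates over \emph{every} candidate faulty set $F$ with $\abs{F} \le f$; in each iteration nodes flood their current states, estimate the zero/one sides along paths that exclude $F$, and a case analysis (driven by how much of $F$ lies on each side) determines which side may switch, a switch being allowed only upon receiving identical values on $f+1$ node-disjoint paths excluding $F$ (Lemma \ref{lemma propagates}). Validity is proved for every iteration, so wrong guesses of $F$ do no harm (Lemma \ref{lemma validity}), while agreement is claimed only for the iteration in which $F$ contains all actual faults (Lemma \ref{lemma agreement}). This ``guess the fault set'' structure is what makes the tight condition sufficient, and it is inherently exponential; the paper explicitly leaves a polynomial-time algorithm at this threshold as an open problem, which is a further warning sign that the polynomial-round, broadcast-then-aggregate reduction you propose cannot be made to work without a fundamentally different idea.
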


In contrast the point-to-point communication model requires that $G$ must be $(2f+1)$-connected and must have at least $3f+1$ nodes \cite{DOLEV198214, Fischer1986}.
Prior work \cite{NaqviMaster, NaqviBroadcast} has identified the graph condition in Theorem \ref{theorem sufficiency} to be necessary under local broadcast, so the result in this paper implies that the condition is tight.
We prove Theorem \ref{theorem sufficiency} constructively by providing an algorithm that solves Byzantine Consensus under these assumptions and an accompanying proof of correctness.
	    
	    \subsection{Algorithm}
In this section we describe an algorithm to solve consensus on a graph $G$ such that it is $(\floor{ 3f / 2 } + 1)$-connected and each node has degree at least $2f$.
Every communication in the algorithm is a message $m = (L, b, P)$ where $L$ is a label, $b \in \set{ 0, 1 }$ is a binary valued message body, and $P$ is a path where the source node indicates the node where the message originated and the remaining nodes indicate how the message has travelled so far.

A node $v$ communicates with the rest of the graph via ``flooding'' where $v$ sends a message to the entire network with a unique label corresponding to each iteration of flooding.
When a node $u$ receives a message $m = (L, b, P)$ from one of its neighbors $w$ such that $w \not \in P$, then $u$ forwards $m' = (L, b, P + w)$ to its neighbors, if $P + w$ does not contain $n-1$ nodes already.
Since communication is synchronous, $v$ can wait for $n$ synchronous rounds to ensure that the message has propagated to the entire network.
If a faulty node $u$ chooses to stay silent when a message is expected, then the non-faulty neighbors of $u$ replace silence with a default message so that we can assume that a message is always sent when expected, even though it may be tampered.
Similarly if a faulty node $u$ sends a message with a label $L'$ when a label $L$ is expected, then the non-faulty neighbors of $u$ replace the label $L'$ in the message with $L$ when forwarding the message.
If a faulty node $u$ sends a message of the form $(L, b, P)$ where path $P + u$ does not exist, then its non-faulty neighbors ignore it, so that we can assume that messages are always sent along actual paths that exist.
To achieve this, all nodes know the topology of the graph $G$.

In the local broadcast model we can ensure that even a faulty node can only flood a single value with any given label, which is in contrast with the point-to-point communication model where a faulty node can send different conflicting values to its neighbors.
If a faulty node attempts to flood two different values with a given label, then its non-faulty neighbors simply chose to forward first such value, ignoring the rest.
In general, for any label $L$ and path $P$, a non-faulty node will forward only the first message of the form $(L, b, P)$ it receives from a neighbor $u$, ignoring the rest.

\begin{algorithm}[H] \label{algorithm consensus}
    \SetAlgoLined
    \SetKwFor{For}{For}{do}{end}
    Each node $v$ has binary input $\mathtt{input}_v \in \set{ 0, 1 }$ and maintains a binary state $\gamma_v \in \set{ 0, 1 }$.\\
    Initially, $\gamma_v := \mathtt{input}_v$.\\
    \For{every \emph{candidate faulty} set $F \subseteq V$ such that $\abs{F} \le f$}
    {\begin{enumerate}
        [label=Step (\alph*),topsep=0pt,labelindent=12pt,leftmargin=!]
        \item
        Flood $\gamma_v$ with label $F$.
        In the remaining steps of the iteration, values received\\ with label $F$ are considered implicitly.
        
        \item
        For each node $u \in V$, identify a single $uv$-path $P_{uv}$ that excludes $F$.
        Set
        \begin{align*}
            Z_v &:= \set{ u \in V \mid \text{$v$ received $0$ from $u$ along $P_{uv}$ in step (a)} },\\
            N_v &:= \set{ u \in V \mid \text{$v$ received $1$ from $u$ along $P_{uv}$ in step (a)} }.
        \end{align*}
        
        \item
        Define $A_v$ and $B_v$ as follows.
        \begin{enumerate}
            [label=Case \arabic*:,topsep=0pt,labelindent=0pt]
            \item $\abs{Z_v \cap F} \le \floor{f/2}$ and $\abs{N_v} > f$.
            Set $A_v := N_v$ and $B_v := Z_v$.
            
            \item $\abs{Z_v \cap F} \le \floor{f/2}$ and $\abs{N_v} \le f$.
            Set $A_v := Z_v$ and $B_v := N_v$.
            
            \item $\abs{Z_v \cap F} > \floor{f/2}$ and $\abs{Z_v} > f$.
            Set $A_v := Z_v$ and $B_v := N_v$.
            
            \item $\abs{Z_v \cap F} > \floor{f/2}$ and $\abs{Z_v} \le f$.
            Set $A_v := N_v$ and $B_v := Z_v$.
        \end{enumerate}
        If $v \in A_v$, then $\gamma_v$ stays unchanged.
        If $v \in B_v$, then identify $f+1$ node disjoint $A_v v$-paths that exclude $F$.
        If $v$ received identically $0$ (resp. $1$) along these $f+1$\\ paths in step (a), then set $\gamma_v$ to be $0$ (resp. $1$); otherwise $\gamma_v$ stays unchanged.
    \end{enumerate}}
    
    Output $\gamma_v$.
    
    \caption{Algorithm for Byzantine Consensus under local broadcast model. All steps are performed by a node $v$.}
\end{algorithm}

The formal procedure is presented as Algorithm \ref{algorithm consensus} and we give a formal proof of correctness in Section \ref{section correctness}.
The algorithm is inspired by the Byzantine Consensus algorithm for directed graphs by Tseng and Vaidya \cite{LewisByzantineDirected}.
Here we describe the algorithm informally.
Each node $v$ starts with a binary input and maintains a binary state $\gamma_v$ which is initially set to its input.
The algorithm has one loop and in each iteration we select a set $F$ such that $\abs{F} \le f$.
We call $F$ the \emph{candidate faulty} set of the iteration.
As a first step, each node $v$ floods its state $\gamma_v$.
Note that this requires $v$ to 1) locally broadcast $\gamma_v$ to its neighbors and 2) assist other nodes to flood their states by forwarding messages received from its neighbors, as described earlier.

Let $Z$ be the set of nodes that flooded $0$ in the first step and let $N$ be the set of nodes that flooded $1$.
Note that $Z$ and $N$ partition the node set $V$.
In step (b), each non-faulty node $v$ finds its own estimate $Z_v$ and $N_v$ of $Z$ and $V$ respectively, using $F$ as the candidate faulty set.
This is done by selecting, for each node $u$, a single path $P_{uv}$ that excludes $F$.
Such a path must exist since $G$ is $(\floor{3f/2}+1)$-connected.
Observe that if $u = v$, then $P_{vv}$ is the single node path consisting of exactly $v$.
If $v$ receives $0$ along $P_{uv}$, then $v$ puts $u$ in $Z_v$ and if $v$ receives $1$ along this path, then $v$ puts $u$ in $N_v$.
Note that if there exists a faulty node in $V - F$, then $Z_v$ and $N_v$ may not be the same as $Z$ and $N$.
It is also possible that, for a node $w \ne v$, $Z_v \ne Z_w$ and $N_v \ne N_w$.
However, $Z_v$ and $N_v$ partition the node set $V$.

In step (c), using $F$ as the candidate faulty set, either nodes in $Z$ change their state to $1$ or nodes in $N$ change their state to $0$.
Since each non-faulty node $v$'s estimates $Z_v$ and $N_v$ may vary when faulty nodes exist in $V - F$, non-faulty nodes may disagree on which nodes must retain their state and which nodes must switch.
The four cases in Algorithm \ref{algorithm consensus} step (c) select sets $A_v$ and $B_v$ so that, from $v$'s perspective, nodes in $A_v$ retain their state and nodes in $B_v$ switch.
$A_v$ and $B_v$ are selected so that each node $u$ in $B_v$ has at least $f+1$ node disjoint $A_v u$-paths excluding $F$ (Lemma \ref{lemma propagates}).
This ensures that 1) when $V - F$ contains faulty nodes, then non-faulty nodes do not switch to a state of a faulty node (Lemma \ref{lemma validity}) and 2) when $F$ contains all the faulty nodes, then non-faulty nodes in $B_v$ do switch their states correctly to achieve agreement (Lemma \ref{lemma agreement}).

At the end, each node $v$ outputs its state $\gamma_v$ as its decision.
	        
        \subsection{Proof of Correctness} \label{section correctness}
We assume $f > 0$ in the proof.
Theorem \ref{theorem sufficiency} is trivially true for $f=0$.
At a high level, our objective is two folds.
Firstly, we want to show that regardless of the choice of the candidate faulty set $F$, a non-faulty node $v$ will only change its state in step (c) of the iteration to match a state of some non-faulty node at the beginning of the iteration (Lemma \ref{lemma validity}).
By a simple induction, it follows that the final state of $v$ is an input of some non-faulty node.
Secondly, we want to show that when $F$ contains the actual set of faulty nodes in a given execution, then all non-faulty nodes reach agreement in that iteration (Lemma \ref{lemma agreement}).
Since all non-faulty nodes now have the same state, by Lemma \ref{lemma validity} it follows that in the future iterations all non-faulty nodes will maintain this state.

We start by observing that if a message is received along a fault-free path, then the exact same message was indeed sent.
Recall that even a faulty node can only flood a single value in step (a) of the algorithm.

\begin{observation} \label{observation fault-free reliable}
    For any iteration of the loop in Algorithm \ref{algorithm consensus}, in step (a) for any two nodes $u, v \in V$ (possibly faulty), $v$ receives $b$ along a fault-free $uv$-path $P_{uv}$ if and only if $u$ flooded $b$.
\end{observation}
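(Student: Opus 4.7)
The plan is to prove both implications by induction along the fault-free path $P_{uv} = u, w_1, \ldots, w_k, v$, leveraging two standing facts: (i) every internal node $w_i$ is non-faulty and therefore follows the forwarding rule of the algorithm exactly, and (ii) under the local broadcast model, any single transmission by a node---even a faulty one---is received identically by all of its non-faulty neighbors, so the message on each hop between a sender and its non-faulty neighbors is unambiguous. Using the conventions spelled out just above the algorithm (default substitution for silence, label replacement, ignoring invalid paths, and first-value-wins for equivocation attempts), the value ``$u$ flooded'' in step (a) with label $F$ is well-defined even when $u$ is faulty: it is the unique $b$ such that the non-faulty neighbors of $u$ treat $u$'s step-(a) transmission as the flooding message with value $b$ and label $F$.

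For the ``if'' direction, I would induct on $k$. In the base case $k=0$, $v$ is a neighbor of $u$ and receives $u$'s step-(a) transmission directly; by local broadcast the carried value is $b$, and this is what it means for $v$ to receive $b$ along $P_{uv}$. For the inductive step, $w_1$ is non-faulty and, by local broadcast, receives the same flooding message from $u$ that all of $u$'s non-faulty neighbors receive; $w_1$ then forwards the corresponding message per the protocol. Proceeding hop by hop, each non-faulty $w_i$ appends itself to the path field and rebroadcasts, so $v$ ultimately receives $b$ with path field matching $P_{uv}$.

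For the ``only if'' direction, I would induct in reverse starting from $v$. If $v$ receives $b$ along $P_{uv}$, then the triggering message is received from $w_k$ with path field recording $u, w_1, \ldots, w_{k-1}$. Since $w_k$ is non-faulty, the forwarding rule forces $w_k$ to have previously received a message of the corresponding form carrying value $b$ from $w_{k-1}$ (else $w_k$ would not have forwarded it). Iterating this backward argument along the non-faulty internal nodes $w_{k-1}, \ldots, w_1$, we conclude that $w_1$ received from $u$ the initial flooding message with value $b$, which by the convention above is exactly what it means for $u$ to have flooded $b$.

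The only nontrivial aspect is not the induction but the preliminary bookkeeping: making precise what ``$u$ flooded $b$'' means when $u$ itself is faulty, so that the biconditional is even a meaningful statement. Once the non-equivocation convention of local broadcast pins this down, both directions reduce to a straightforward walk along $P_{uv}$, since fault-freeness of the path excludes equivocation at every intermediate hop.
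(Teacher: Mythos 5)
Your proof is correct and matches the paper's reasoning: the paper states this as an observation without a formal proof, justifying it only by the remark that non-faulty internal nodes forward faithfully and that local broadcast prevents even a faulty $u$ from flooding more than one value per label. Your hop-by-hop induction (plus the careful definition of what ``$u$ flooded $b$'' means for a faulty $u$) is exactly the argument the paper leaves implicit.
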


A priori, it is not clear if the paths identified in the algorithm always exist.
For paths in step (b) the existence is relatively straightforward to prove.

\begin{lemma}
    For any two nodes $u, v$ and any iteration of the loop in Algorithm \ref{algorithm consensus} with a candidate faulty set $F$, there exists a $uv$-path that excludes $F$.
\end{lemma}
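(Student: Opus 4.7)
The plan is to invoke Menger's theorem directly, using the fact that $G$ is $(\floor{3f/2}+1)$-connected, and then count. Roughly, we have more node-disjoint $uv$-paths than faulty nodes to remove, so at least one path survives with no internal node in $F$.

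In more detail, I would first handle the trivial case $u = v$, where the single-node path consisting of $u$ has no internal nodes and hence vacuously excludes $F$. For $u \ne v$, I would invoke Menger's theorem (cited in Section \ref{section model}): since $G$ is $(\floor{3f/2}+1)$-connected, there exist $\floor{3f/2}+1$ internally node-disjoint $uv$-paths $P_1, \ldots, P_{\floor{3f/2}+1}$. Because the internal node sets of these paths are pairwise disjoint, a single node of $F \setminus \set{u, v}$ can appear as an internal node of at most one such path. Thus $F$ intersects the internal nodes of at most $\abs{F} \le f$ of the paths.

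Next I would compare counts: the number of paths whose internal nodes avoid $F$ is at least $(\floor{3f/2}+1) - f = \floor{f/2} + 1 \ge 1$. Hence at least one $P_i$ has no internal node in $F$, i.e., it excludes $F$ in the sense defined in Section \ref{section model} (the endpoints $u, v$ are permitted to lie in $F$). Choosing this $P_i$ as the desired $uv$-path completes the proof.

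I do not expect any real obstacle here; the lemma is essentially a bookkeeping observation that the connectivity bound $\floor{3f/2}+1$ carries the slack of $\floor{f/2}+1$ over $f$, which is more than enough to guarantee survival of one path. The only mild care needed is in the edge cases ($u = v$, or $u, v$ themselves lying in $F$), both of which are harmless because ``excludes'' is defined only with respect to internal nodes.
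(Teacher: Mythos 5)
Your argument is correct, but it reaches the conclusion by a slightly different mechanism than the paper. The paper's proof is a one-line deletion argument: set $F' = F - u - v$, note $\abs{F'} \le f \le \floor{3f/2}$, and use the definition of $(\floor{3f/2}+1)$-connectivity directly to conclude that $G - F'$ is still connected, so any $uv$-path in $G - F'$ excludes $F$. You instead invoke Menger's theorem to obtain $\floor{3f/2}+1$ internally disjoint $uv$-paths and count: each node of $F \setminus \set{u,v}$ can be internal to at most one of them, so at least $\floor{f/2}+1 \ge 1$ paths survive. The two routes are equivalent in substance (Menger is exactly the bridge between them), but the deletion argument is shorter and needs no disjointness bookkeeping, while your counting version gives marginally more information --- it shows that $\floor{f/2}+1$ node-disjoint $uv$-paths excluding $F$ survive, not just one --- which is the same style of slack the paper later exploits in Lemma \ref{lemma propagates}. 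Your handling of the edge cases ($u = v$, endpoints in $F$) is consistent with the paper's definition of ``excludes,'' so there is no gap.
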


\begin{proof}
    Let $F' = F - u - v$.
    Since $\abs{F'} \le f$ and $G$ is $(\floor{3f/2}+1)$-connected, we have that $G - F'$ is a connected graph and so there exists a $uv$-path in $G - F'$ which is a $uv$-path in $G$ that excludes $F$.
\end{proof}

The next lemma shows that the choice of sets $A_v$ and $B_v$ ensures that paths in step (c) also exist.

\begin{lemma} \label{lemma propagates}
    For any non-faulty node $v$ and any iteration of the loop in Algorithm \ref{algorithm consensus} with a candidate faulty set $F$, in step (c) if $v \in B_v$, then there exist $f+1$ node disjoint $A_v v$-paths that exclude $F$.
\end{lemma}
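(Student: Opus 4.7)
I will split into the four cases defined in step (c) of Algorithm \ref{algorithm consensus} and handle them in two groups. In Cases 1 and 3, where $A_v$ is chosen to be whichever of $N_v, Z_v$ has size greater than $f$, I apply the generalised form of Menger's theorem in a suitably reduced graph. In Cases 2 and 4, where $B_v$ is the set of size at most $f$, I use the minimum-degree bound on $v$ directly, picking off $f+1$ length-$1$ paths.

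\textbf{Cases 1 and 3.} In both, $|A_v| \ge f+1$ is immediate from the case hypotheses ($|N_v| > f$ and $|Z_v| > f$, respectively). The key bookkeeping step is to show $|F \setminus A_v| \le \floor{f/2}$. For Case 1, $A_v = N_v$, so $F \setminus A_v \subseteq F \cap Z_v$, which has size at most $\floor{f/2}$ by hypothesis. For Case 3, $A_v = Z_v$, so $F \setminus A_v \subseteq F \cap N_v$; using $|F| \le f$ together with the case hypothesis $|F \cap Z_v| \ge \floor{f/2}+1$ yields $|F \cap N_v| \le \ceil{f/2} - 1 \le \floor{f/2}$. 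Let $F' := F \cap (V \setminus A_v \setminus \set{v})$ and $H := G - F'$. Since $G$ is $(\floor{3f/2}+1)$-connected and $|F'| \le \floor{f/2}$, $H$ is $(f+1)$-connected. Because $v \in V(H)$, $A_v \subseteq V(H)$, $v \notin A_v$, and $|A_v| \ge f+1$, the generalised Menger result cited in Section \ref{section model} yields $f+1$ node-disjoint $A_v v$-paths in $H$. Their internal nodes lie in $V(H) \setminus A_v \setminus \set{v}$, which is disjoint from $F$, so these paths exclude $F$ in $G$, as required.

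\textbf{Cases 2 and 4.} In both, $v \in B_v$ and $|B_v| \le f$, so $v$ has at most $|B_v \setminus \set{v}| \le f-1$ neighbours in $B_v$. Since $v$ has degree at least $2f$, it has at least $2f - (f-1) = f+1$ neighbours in $A_v$. The single edges from these $f+1$ neighbours to $v$ give $f+1$ node-disjoint $A_v v$-paths with no internal nodes, hence they trivially exclude $F$.

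\textbf{Main obstacle.} The conceptual crux is recognising that the four-way case split is engineered precisely so that exactly one of the two approaches applies in each case; Case 3 in particular requires combining $|F| \le f$ with the lower bound $|F \cap Z_v| \ge \floor{f/2}+1$ to force $|F \cap N_v| \le \floor{f/2}$, without which the Menger-based connectivity estimate would fall short of $f+1$. Once those bounds are in hand, both sub-arguments reduce to short applications of the connectivity facts recalled in Section \ref{section model}.
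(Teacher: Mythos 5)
Your Cases 2 and 4 are exactly the paper's argument, and your bookkeeping in Cases 1 and 3 (showing $\abs{F \setminus A_v} \le \floor{f/2}$, so that deleting $F' = F \setminus A_v \setminus \set{v}$ leaves an $(f+1)$-connected graph $H$) is the same computation the paper performs. However, there is a genuine gap in how you finish Cases 1 and 3. After applying the generalised Menger result in $H$ to the set $A_v$, you assert that the internal nodes of the resulting $f+1$ node-disjoint $A_v v$-paths lie in $V(H) \setminus A_v \setminus \set{v}$, hence are disjoint from $F$. That does not follow from the cited statement: it only guarantees that the paths have no common sources or internal nodes; it does not prevent a node of $A_v$ that is \emph{not} among the $f+1$ chosen sources from occurring as an internal node of one of the paths. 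Since you deleted only $F \setminus A_v \setminus \set{v}$, the nodes of $A_v \cap F$ are still present in $H$, and if one of them lies in the interior of a path, that path fails to exclude $F$.

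This is precisely the point the paper's proof is built around: it picks a source set $A'_v \subseteq A_v$ of size $f+1$ that contains \emph{all} of $A_v \cap F$ (possible because $\abs{A_v \cap F} \le \abs{F} \le f$) and applies the connectivity argument to $A'_v$, so every node of $A_v \cap F$ appears only as a source and never as an internal node. Your argument is repairable with one extra step --- for instance, truncate each path at its last vertex in $A_v$ (the one closest to $v$), which preserves node-disjointness and removes all $A_v$-vertices, in particular all vertices of $A_v \cap F$, from the interiors; alternatively, invoke the stronger ``fan'' form of Menger's theorem in which each path meets $A_v$ in exactly one vertex, namely its source. As written, though, the claim that the internal nodes avoid $A_v$ is unjustified, and it is the one step where the paper's proof does real work that your proposal skips.
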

\begin{proof}
    Fix an iteration in the algorithm and the candidate faulty set $F$.
    Consider an arbitrary non-faulty node $v$ such that $v \in B_v$ in step (c).
    There are 4 cases to consider, corresponding to the 4 cases in step (c).
    \begin{enumerate}
        [label=Case \arabic*:,topsep=0pt,labelindent=0pt]
        \item $\abs{Z_v \cap F} \le \floor{f/2}$ and $\abs{N_v} > f$.
        Then $A_v := N_v$ and $B_v := Z_v$.
        Therefore there exist at least $f+1$ nodes in $A_v$.
        Node $v$ selects $f+1$ nodes $A'_v$ from $A_v$ by choosing all nodes from $A_v \cap F$ and the rest arbitrarily from $A_v - F$.
        By choice of $B_v$, we have that $B'_v = B_v \cap (F-v)$ has at most $\floor{f/2}$ nodes.
        Since $G$ is $(\floor{3f/2}+1)$-connected, so there exist $f+1$ node disjoint $A'_v v$-paths in $G$ that exclude $B'_v$.
        Furthermore, since all the nodes in $A_v \cap F$ are the source nodes in these paths and $F = (A_v \cap F) \cup (B_v \cap F)$, we have that these paths exclude $F$\footnote{recall that a path that excludes $X$ does not have nodes from $X$ as internal nodes; however, nodes from $X$ may be the source or terminal nodes.}.
        
        \item $\abs{Z_v \cap F} \le \floor{f/2}$ and $\abs{N_v} \le f$.
        Then $A_v := Z_v$ and $B_v := N_v$.
        Since the degree of $v$ is at least $2f$ and there are at most $f$ nodes in $B_v$ (including $v$), we have that $v$ has at least $f+1$ neighbors in $A_v$.
        There are therefore $f+1$ node disjoint $A_v v$-paths that have no internal nodes and hence exclude $F$.
        
        \item $\abs{Z_v \cap F} > \floor{f/2}$ and $\abs{Z_v} > f$.
        Then $A_v := Z_v$ and $B_v := N_v$.
        We have that
        \begin{align*}
            \abs{N_v \cap F}
                &=      \abs{F} - \abs{Z_v \cap F}  \\
                &\le    f - \floor{f/2} - 1 \\
                &\le    \floor{f/2}.
        \end{align*}
        So this case is the same as Case 1 with the roles of $Z_v$ and $N_v$ swapped.
        
        \item $\abs{Z_v \cap F} > \floor{f/2}$ and $\abs{Z_v} \le f$.
        Then $A_v := N_v$ and $B_v := Z_v$.
        From the analysis in Case 3, we have that $\abs{N_v \cap F} \le \floor{f/2}$.
        So this case is the same as Case 2 with the roles of $Z_v$ and $N_v$ swapped.
    \end{enumerate}
    In all four cases we have that there do exist $f+1$ node disjoint $A_v v$-paths that exclude $F$.
\end{proof}

We now show that in any iteration, the state of a non-faulty node at the end of the iteration equals the state of some non-faulty node at the beginning of the iteration. 

\begin{lemma} \label{lemma validity}
    For any non-faulty node $v$ and any iteration of the loop in Algorithm \ref{algorithm consensus}, the state of $v$ at the end of the iteration is equal to a state of some non-faulty node $u$ at the beginning of the iteration.
\end{lemma}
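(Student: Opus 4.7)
The plan is to break into cases based on what happens in step (c). If $v \in A_v$, then by the algorithm $\gamma_v$ is unchanged, so the final state equals $v$'s own state at the start of the iteration, and $v$ is non-faulty, giving the claim for free. Similarly, if $v \in B_v$ but the $f+1$ values received along the $A_v v$-paths are not identical, again $\gamma_v$ is unchanged and the same argument applies. The only substantive case is when $v \in B_v$ and $v$ switches $\gamma_v$ to some bit $b$ because it received $b$ identically along $f+1$ node disjoint $A_v v$-paths that exclude $F$ (these exist by Lemma \ref{lemma propagates}).

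In that case, I would argue that at least one of those $f+1$ paths is fault-free and has a non-faulty source. Since the paths are node disjoint $A_v v$-paths, their sources are pairwise distinct, and the internal nodes are disjoint across paths as well. Thus the union of the sources and internal nodes of the $f+1$ paths contains at least $f+1$ distinct nodes (one source per path, plus internal nodes). Because there are at most $f$ faulty nodes in total, some path $P$ among these $f+1$ must have a non-faulty source $u \in A_v$ and no faulty internal node, i.e., $P$ is fault-free. Then Observation \ref{observation fault-free reliable} applied to $P$ gives that $v$ received $b$ along $P$ if and only if $u$ flooded $b$ in step (a). Since $u$ is non-faulty, $u$ floods its actual state $\gamma_u$ from the start of the iteration, so $\gamma_u = b$, matching $v$'s new state.

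The only mild obstacle is the counting argument establishing the existence of a fault-free path: one must be careful that "node disjoint $A_v v$-paths" as defined in Section \ref{section model} prevents repeated sources as well as shared internal nodes, so that the $f+1$ paths together consume $f+1$ potentially-corruptible nodes (excluding the common terminal $v$, which is non-faulty by assumption). Once this observation is in hand, pigeonhole with $\abs{\text{faulty nodes}} \le f$ immediately yields the desired fault-free path, and the rest reduces to a direct application of Observation \ref{observation fault-free reliable}.
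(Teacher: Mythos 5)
Your proposal is correct and follows essentially the same route as the paper: dispose of the cases where $\gamma_v$ is unchanged, then use the disjointness of the $f+1$ $A_v v$-paths and the bound of $f$ faulty nodes to extract one path that is fault-free with a non-faulty source $u$, and conclude via Observation \ref{observation fault-free reliable} that $v$'s new state equals $\gamma_u$ at the start of the iteration. Your explicit counting of sources and internal nodes just spells out the pigeonhole step the paper states in one line.
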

\begin{proof}
    Fix an iteration in the algorithm and the candidate faulty set $F$.
    For any node $u$, we denote the state at the beginning of the iteration by $\gamma_u^{\operatorname{start}}$ and the state at the end of the iteration by $\gamma_u^{\operatorname{end}}$.
    Consider an arbitrary non-faulty node $v$ and the sets $A_v$ and $B_v$ in step (c).
    Let $P_1, \dots, P_{f+1}$ be the $f+1$ node disjoint $A_v v$-paths identified by $v$ in step (c).
    If $\gamma_v^{\operatorname{start}} = \gamma_v^{\operatorname{end}}$, then the claim is trivially true.
    So suppose $v \in B_v$ and $v$ receives identical values along $P_1, \dots, P_{f+1}$ in step (a).
    Since the number of faulty nodes is at most $f$, thus at least one of these paths is both fault-free and has a non-faulty source node, say $u$.
    By Observation \ref{observation fault-free reliable}, it follows that whatever value is received by $v$ along this path in step (a) is the value flooded by $u$.
    Therefore $\gamma_v^{\operatorname{end}} = \gamma_u^{\operatorname{start}}$, where $u$ is a non-faulty node, as required.
\end{proof}

Next, we show that when the candidate faulty set $F$ is properly selected, all non-faulty nodes reach agreement in that iteration.

\begin{lemma} \label{lemma agreement}
    Consider an iteration of the loop in Algorithm \ref{algorithm consensus} such that all faulty nodes are contained in the candidate faulty set $F$.
    Then for any two non-faulty nodes $u, v \in V$, we have that $u$ and $v$ have the same state at the end of the iteration.
\end{lemma}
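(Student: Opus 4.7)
The plan is to exploit the fact that once the candidate faulty set $F$ contains every actual faulty node, every path excluding $F$ is automatically fault-free, because its internal nodes all lie in $V-F$ and are therefore non-faulty. Combined with Observation~\ref{observation fault-free reliable} and the fact that, in step (a), even a faulty source floods a single well-defined value, this means that for any node $u$ (faulty or not) the value that a non-faulty $v$ records from $u$ along $P_{uv}$ is exactly the unique value $u$ flooded. Consequently the estimated sets are the true ones, and in particular they do not depend on which non-faulty node computes them: there exist global sets $Z$ and $N$ such that $Z_v = Z$ and $N_v = N$ for every non-faulty $v$.

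Next, I would observe that the four cases in step (c) depend only on $Z_v$, $N_v$, and $F$. Since the first two are the same for every non-faulty node and $F$ is fixed, every non-faulty node falls into the same case and computes the same sets $A$ and $B$ (call them $A_v = A$ and $B_v = B$). By construction, $A$ is either $Z$ (Cases~2 and~3) or $N$ (Cases~1 and~4); in the former every node in $A$ flooded $0$ and in the latter every node in $A$ flooded $1$. Let $b$ denote the common value flooded by the nodes in $A$.

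From here the two sub-claims are immediate. For any non-faulty $v \in A$, the starting state $\gamma_v^{\operatorname{start}}$ equals the value $v$ flooded, namely $b$; and step (c) leaves $\gamma_v$ unchanged, so $\gamma_v^{\operatorname{end}} = b$. For any non-faulty $v \in B$, by Lemma~\ref{lemma propagates} there exist $f+1$ node-disjoint $Av$-paths excluding $F$; since $F$ contains all faulty nodes, each such path is fault-free, and each source lies in $A$ and therefore flooded $b$. By Observation~\ref{observation fault-free reliable}, $v$ receives $b$ identically along all $f+1$ paths, so step (c) sets $\gamma_v^{\operatorname{end}} = b$. Thus every non-faulty node ends the iteration in state $b$, proving agreement.

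The main delicate point, and the one I would emphasize, is the very first step: explaining precisely why $Z_v$ and $N_v$ are the same for all non-faulty $v$ even when some sources are faulty. This requires using both that path internals in $V - F$ are non-faulty (so forwarding is honest) and that the local broadcast assumption forces every faulty source to commit to a single flooded value per label; the rest of the argument is a straightforward case check that falls out of the construction of $A$ and $B$.
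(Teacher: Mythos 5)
Your proof is correct and follows essentially the same route as the paper: first establish that $Z_v = Z$ and $N_v = N$ for every non-faulty $v$ (because paths excluding $F$ are fault-free and even faulty sources flood a single value), deduce common sets $A$ and $B$, and then handle $v \in A$ and $v \in B$ exactly as the paper does using Observation~\ref{observation fault-free reliable} and the $f+1$ fault-free disjoint paths. No gaps; the only cosmetic difference is that you cite Lemma~\ref{lemma propagates} explicitly for path existence, which the paper leaves implicit in step~(c).
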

\begin{proof}
    Fix an iteration of the algorithm and the candidate faulty set $F$ such that all faulty nodes are contained in $F$.
    Let $Z$ be the set of nodes that flooded $0$ in step (a) of the iteration and let $N$ be the set of nodes that flooded $1$ in step (a).
    We first show that for any non-faulty node $v$, $Z_v = Z$ and $N_v = N$.
    Consider an arbitrary node $w$ that flooded $0$ (resp. $1$) in step (a) of the iteration so that $w \in Z$ (resp. $w \in N$).
    Observe that $P_{wv}$ identified in step (b) of the iteration excludes $F$ and is fault-free.
    Therefore, by Observation \ref{observation fault-free reliable} $v$ receives $0$ (resp. $1$) along $P_{wv}$ and correctly sets $w \in Z_v$ (resp. $w \in N_v$), as required.
    
    It follows that for any two non-faulty nodes $u$ and $v$, we have that $Z_u = Z_v = Z$ and $N_u = N_v = N$.
    Thus $A_u = A_v$ and $B_u = B_v$.
    Let $A := A_u$ and $B := B_u$.
    Now all nodes in $A$ flooded identical value in step (a), say $\alpha$.
    If $u \in A$, then $u$'s state is $\alpha$ at the beginning of the iteration and stays unchanged in step (c).
    Therefore, at the end of the iteration $\gamma_u = \alpha$.
    If $u \in B$, then observe that the $f+1$ node disjoint $A u$-paths identified by $u$ in step (c) are all fault-free.
    By Observation \ref{observation fault-free reliable}, it follows that $u$ receives $\alpha$ identically along these $f+1$ paths and so, at the end of the iteration, $\gamma_u = \alpha$.
    Similarly for $v$, we have that $\gamma_v = \alpha$, as required.
\end{proof}

We now have all the necessary ingredients to prove Theorem \ref{theorem sufficiency}.

\begin{proof_of}{Theorem \ref{theorem sufficiency}}
    The algorithm terminates in finite time at every node.
    This satisfies the termination property.
    Validity follows from Lemma \ref{lemma validity} via a simple induction.
    For agreement, observe that there exists at least one iteration of the loop in Algorithm \ref{algorithm consensus} where all faulty nodes are contained in the candidate faulty set $F$, namely when $F$ is exactly the set of faulty nodes.
    From Lemma \ref{lemma agreement} we have that all non-faulty nodes have the same state at the end of this iteration.
    From Lemma \ref{lemma validity}, it follows that non-faulty nodes do not change their states in subsequent iterations.
    Therefore, all non-faulty nodes output the same state at the end of the algorithm.
\end{proof_of}
    
    \section{Discussion} \label{section discussion}
In this work, we have presented a constructive proof that it is sufficient for the communication graph $G$ to be $(\floor{3f/2}+1)$-connected and have degree at least $2f$ for Byzantine Consensus under local broadcast.
The results in this paper along with prior results in \cite{NaqviMaster, NaqviBroadcast} show that the condition that $G$ must be $(\floor{3f/2}+1)$-connected and have degree at least $2f$ is tight.
The algorithm provided in this paper is not polynomial.
In \cite{NaqviMaster, NaqviBroadcast} we gave an efficient algorithm when $G$ is $2f$-connected.
We leave finding an efficient algorithm for the tight condition for future work. 

	\bibliographystyle{abbrv}
	\bibliography{bib}

\end{document}